\theoremstyle{definition}
\newtheorem{defn}{Definition}[section]
\theoremstyle{definition}
\newtheorem{lemma}{Lemma}[section]
\theoremstyle{definition}
\newtheorem{theorem}{Theorem}[section]
\theoremstyle{definition}
\newtheorem{corollary}{Corollary}[section]
\theoremstyle{definition}
\theoremstyle{definition}
\theoremstyle{definition}
\newtheorem{remark}{Remark}[section]
\theoremstyle{definition}
\title{Two theorems about the P versus NP problem}
\author{Tianheng Tsui}
\begin{document}

\maketitle

\begin{abstract}
Two theorems about the P versus NP problem be proved in this article (1) There exists a language $L$, that the statement $L \in \textbf{P}$ is independent of ZFC. (2) There exists a language $L \in \textbf{NP}$, for any polynomial time deterministic Turing machine $M$, we cannot prove $L$ is decidable on $M$.
 
\end{abstract}

\begin{center}
	\item
	\section{Introduction}
\end{center}

First of all, the results of this article neither imply  $\textbf{P} = \textbf{NP}$ nor $\textbf{P} \ne \textbf{NP}$. Let me to explain.

The meaning of the first theorem is that we can define a language $L$ formally in ZFC, and there are two models of ZFC, model A and model B, the sentence $L\in \textbf{P}$ is true in model A and false in model B. Therefore we cannot prove or disprove $L\in \textbf{P}$ in ZFC. The method for proving this result is to reduce the independence of $L\in \textbf{P}$ to a sentence $\varphi$ which is already known independent of ZFC.

More specifically, first,we can construct a function $f$ with the property `` $\forall n f(n) =1$ is independent of ZFC'', that is, there are two model of ZFC, model A and model B, the sentence $\forall n f(n) =1$ is true in model A and false in model B. Then construct a Turing machine $M$ depending on $f$, and define the language $L =\{n|\ M(n)=1\}$. Finally, we prove the function $f$ like a switch, such that $L\in \textbf{P}$ is true in model A where $\forall n f(n) =1$ is true, and $L \in \textbf{P}$ is false in model B where $\forall n f(n) =1$ is false, therefore the independence of $L \in \textbf{P}$ be proved.

The meaning of the second theorem is that we can define a language $L$ formally in ZFC, and we can prove $L \in \textbf{P}$ in ZFC, \textbf{but} for any concrete polynomial time Turing machine $M$, we cannot prove $M$ decide $L$ in ZFC. \textbf{The statement is highly confusing, and what does this mean?!}

Let me to explain. 

First, it can be proved in ZFC that there exist a language $L$ and a non-ploynomial time Turing machine $M$ which decide the language $L$, that is $$L =\{n|\ M(n)=1\} \text{ and } M \not\in \textbf{P}$$.

Second, in order to prove $L \in \textbf{P}$ in ZFC, there are actually two methods, constructive proof and non-constructive proof:\\

\begin{enumerate}
	\item Constructive method: Proving a concrete ploynomial time Turing machine $M_1$ satisfies $\forall n (M(n) = M_1(n))$ in ZFC, i.e., we must construct a concrete Turing machine $M_1$ and give a proof of the sentence $$(M_1 \in \textbf{P}) \wedge \forall n (M(n) = M_1(n))$$ in ZFC, and more, from the soundness of first order logic, we know $M_1$ statisfies the conditions in every model of ZFC. \textbf{But} we will see that such constructive method does not exist, because following reason:
	\item The second method is non-constructive. It prove in each model of ZFC, $W$, there exist a corresponding  concrete ploynomial time Turing machine $M_W$, the sentence $(M_W \in \textbf{P}) \wedge \forall n (M(n) = M_W(n))$ is true in model $W$, that is, in each model of ZFC, $L \in \textbf{P}$ is true, therefore from the G\"odel's completeness theorem, we can prove $L \in \textbf{P}$ in ZFC. \textbf{But} we will prove the $M_A$ in model A must be different from the $M_B$ in model B, where the model A and model B are two different models of ZFC as in the first theorem of this article. Hence we cannot use the first constructive method to prove $L \in \textbf{P}$ in ZFC. That means for any concrete polynomial time Turing machine $M$, we cannot prove $M$ decide $L$ in ZFC.     
\end{enumerate}

Since $\textbf{P} \subset \textbf{NP}$ and above explanation, we get the second theorem: there exists a language $L \in \textbf{NP}$, for any polynomial time deterministic Turing machine $M$, we cannot prove $L$ is decidable on $M$.

\begin{center}
	\item
	\section{Preliminaries and Notation}
\end{center}

This section is devoted to the exposition of basic preliminary material, notations and conventions which be  used throughout of this article.

The notion of algorithm can be defined in terms of Turing machines by means of the Church–Turing thesis, so the sentence ``There is an algorithm \ldots '' means ``There is a Turing machine to compute \ldots '', and sometimes Turing machine algorithms be described in very high level. If a function or a map is recursive, it means that the function or the map can be computed by a Turing machine.

Let the formal Zermelo-Fraenkel axiomatic set theory is denoted by ZF, and ZFC denotes the theory ZF with the Axiom of Choice, and $\mathcal{\omega}$ represents the natural number set $N$ in the formal ZFC system.

\begin{defn}
	\label{defn_classical_tm}
	A Turing machine $\textbf{M}$ is a 5-tuple, $(Q,\Gamma,\delta,q_0,q_{halt})$
	
	$Q$ is a finite set of states, i.e., $\exists i(i \in \omega \wedge \Vert Q \Vert = i)$,
	
	$\Gamma$ is the tape alphabet containing the blank symbol $\sqcup$, and the left end symbol $\triangleright$,
	
	$\delta$:$Q \times \Gamma \longrightarrow Q \times \Gamma \times \{L,S,R\}$ is the transition function,
	
	if $\delta(q,\triangleright) = (p,s,b)$, then $(s = \triangleright) \wedge (b = R)$,
	
	if $\delta(q, a) = (p,s,b)$ and $a \ne \triangleright$, then $s \ne \triangleright $,
	
	$q_0 \in Q$ is the start state,
	
	$q_{halt} \in Q$ is the halt state, that is $\forall a \in \Gamma$:
	
	$\delta(q_{halt}, a) = (q_{halt},a,S)$, and
	
	$\forall q \in Q (\forall a \in \Gamma (\delta(q, a) = \delta(q_{halt}, a)) \rightarrow (q=q_{halt}) )$.

\end{defn}

	Unless otherwise indicated, it will always be assumed that the tape alphabet $\Gamma = \{0,1, \sqcup, \triangleright \}$ throughout this article, and we assume the basic notions and results of mathematical logic, such as formula, sentence, the set of all formulas is recursive \ldots, etc.

%There are some assumptions in this article:

\begin{defn}
	\label{time_complexity}
	(\textbf{time complexity})  Let $M$ be a Turing machine that halts on all inputs. The running time or time complexity of $M$, denoted by $t_M$, is the function $$t_M:\ \omega \rightarrow \omega$$ where $t_M(n)$ is the maximum number of steps that $M$ uses on any input of length $n$. 
\end{defn}

\begin{center}
	\item
	\section{A theorem about P class problems}
\end{center}

Let the formal Zermelo-Fraenkel axiomatic set theory is denoted as ZF, and ZFC denotes the theory ZF with the Axiom of Choice, and $\mathcal{\omega}$ represents the natural number set $N$ in the formal ZFC system. In this section, a theorem about $ \textbf{P}$  class problems be proved: There exists a language $L$, that the statement $L \in \textbf{P}$ is independent of ZFC.

Let $D$ denotes the set of all deducible(provable) expressions of ZFC. Suppose ZFC is a consistent effective formal system, then $D$ is enumerable\cite{yim}. It is obvious that the concepts and statements such as ``Turing machine'', ``Turing machine $M$ do not halts on input $w$'', etc$\ldots$, can be expressed in ZFC.\\  

\begin{lemma}
	\label{lemma_001}
	There exists a Turing machine $M_h$ and an input $w_h$, and let $\varphi$ denotes the statement ``$M_h$ halts on input $w_h$'', then:

\begin{enumerate}
	\item $\varphi$ cannot be proved in ZFC.
	\item $\neg \varphi$ cannot be proved in ZFC.
	\item There exist a model of ZFC, $\mathcal{A}$, $\varphi$ is ture in it.
	\item There exist a model of ZFC, $\mathcal{B}$, $\neg \varphi$ is ture in it.
\end{enumerate}	
hence $\varphi$ is independent of ZFC.

\end{lemma}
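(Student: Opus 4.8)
\emph{Proposal.} The plan is to take $M_h$ to be a Turing machine that searches for a proof of a fixed contradiction in ZFC, so that the halting statement $\varphi$ becomes (provably in ZFC) the arithmetical sentence $\neg\mathrm{Con}(\mathrm{ZFC})$, whose independence is the classical consequence of G\"odel's theorems. Concretely, I would first fix an effective G\"odel numbering of ZFC-derivations and define $M_h$ as follows: on the empty input $w_h = \varepsilon$, the machine enumerates $n = 0, 1, 2, \dots$ and checks whether $n$ codes a ZFC-derivation of ``$\emptyset \neq \emptyset$''; it halts the first time such an $n$ is found, and otherwise runs forever. This $M_h$ is a concrete $5$-tuple in the sense of Definition \ref{defn_classical_tm}. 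Since ZFC can formalize the syntax of its own proof system (using the standing facts that the set of formulas is recursive and that ``Turing machine $M$ halts on input $w$'' is expressible in ZFC), ZFC proves $\varphi \leftrightarrow \neg\mathrm{Con}(\mathrm{ZFC})$, where $\mathrm{Con}(\mathrm{ZFC})$ is the usual $\Pi_1$ consistency statement.

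Next I would establish the two unprovability claims. For item (2): if ZFC $\vdash \neg\varphi$ then ZFC $\vdash \mathrm{Con}(\mathrm{ZFC})$, contradicting G\"odel's second incompleteness theorem (under the standing hypothesis that ZFC is consistent); hence $\neg\varphi$ is not provable in ZFC. For item (1): here a soundness-type hypothesis is needed, and it suffices to assume ZFC is $\Sigma_1$-sound (equivalently, $1$-consistent), i.e. every $\Sigma_1$ sentence it proves is true. Then, since ZFC is consistent, $\neg\mathrm{Con}(\mathrm{ZFC})$ is a false $\Sigma_1$ sentence, so ZFC $\not\vdash \neg\mathrm{Con}(\mathrm{ZFC})$, i.e. ZFC $\not\vdash \varphi$. (Alternatively, replacing the plain consistency statement by a Rosser sentence $\rho$ and letting $M_h$ search for the Rosser witness, so that $\varphi \leftrightarrow \neg\rho$, both (1) and (2) go through from the mere consistency of ZFC via Rosser's trick; I would note this as a strengthening that avoids the $\Sigma_1$-soundness assumption.)

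Finally, for items (3) and (4): from ZFC $\not\vdash \neg\varphi$ it follows that ZFC $+ \varphi$ is consistent, so G\"odel's completeness theorem yields a model $\mathcal{A} \models \mathrm{ZFC} + \varphi$; symmetrically, ZFC $\not\vdash \varphi$ gives a model $\mathcal{B} \models \mathrm{ZFC} + \neg\varphi$. In $\mathcal{A}$ the machine $M_h$ halts on $w_h$ (possibly at a nonstandard step, which is harmless since the claim is only about truth in $\mathcal{A}$), and in $\mathcal{B}$ it does not; hence $\varphi$ is independent of ZFC. One could equally run the argument in the reverse order, deriving (1)--(2) from (3)--(4) by soundness of first-order logic.

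The step I expect to be the main obstacle is item (1): the bare assumption ``ZFC is consistent'' does not rule out ZFC proving its own inconsistency (the theory ZFC $+ \neg\mathrm{Con}(\mathrm{ZFC})$ can itself be consistent), so one genuinely needs either a $\Sigma_1$-soundness hypothesis or Rosser's refinement — this is the point that must be stated carefully. The only other delicate ingredient, routine but essential, is the formalized equivalence $\varphi \leftrightarrow \neg\mathrm{Con}(\mathrm{ZFC})$: namely that the explicit transition table of $M_h$ provably tracks the proof-search inside ZFC, which is the standard arithmetization of metamathematics and which I would invoke rather than spell out.
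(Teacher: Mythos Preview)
Your proposal is correct and reaches the same conclusion, but by a genuinely different route than the paper. The paper does not build $M_h$ explicitly; instead it defines a ``would-be decider'' $U$ that, on input $\langle M,w\rangle$, dovetails a simulation of $M$ on $w$ with an enumeration of ZFC-proofs of ``$M$ does not halt on $w$'', accepting if the simulation halts and rejecting if such a proof is found. Since $HALT_{\textbf{TM}}$ is undecidable, $U$ must loop on some pair $\langle M_h,w_h\rangle$; from the simulation branch one reads off that $M_h$ really does not halt on $w_h$ (hence $\varphi$ is unprovable), and from the proof-search branch that no ZFC-proof of $\neg\varphi$ exists. Items (3) and (4) then follow from completeness, exactly as in your write-up.

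The trade-offs: your approach gives a concrete $M_h$ (the ZFC-inconsistency searcher) and reduces everything to G\"odel's second incompleteness theorem as a black box, which is clean and short; the paper's approach is more self-contained in that it derives the independence from the undecidability of the halting problem without citing G\"odel II, at the cost of a slightly more elaborate construction and a non-explicit $\langle M_h,w_h\rangle$. You are also more careful than the paper about the hypothesis needed for item (1): the paper's step ``$M_h$ really doesn't halt, so ZFC cannot prove $\varphi$'' tacitly uses $\Sigma_1$-soundness just as your argument does, and your Rosser variant is a genuine strengthening that neither proof in its basic form achieves from mere consistency.
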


\begin{proof}
	Let
	\[
	HALT_\textbf{TM} = \{ \langle M, w \rangle |  M \text{ is a Turing machine and }M \text{ halts on input }w \}
	\]

	Let Turing machine $U$ as:\\$U$ = ``On input $\langle M, w \rangle$, where $M$  is a Turing machine and  $w$ is a string:

	\begin{enumerate}
		\item Simulate  $M$  on input  $w$.
		\item At the same time, enumerate the deducible expressions from ZFC, and find whether the expression is ``$M$ do not halts on input $w$'', one by one.
		\item If $M$ ever enters its halt states on input $w$, $U$ accept $\langle M, w \rangle$ and halt.
		\item If found a deducible expression is ``$M$ do not halts on input $w$'', $U$ reject $\langle M, w \rangle$ and halt.''
	\end{enumerate}
	It is well known that $HALT_\textbf{TM}$ is undecidable\cite{hl,ms}. So $U$ does not decide  $HALT_\textbf{TM}$, therefore there exist $\langle M_h, w_h \rangle$, $U$ never halts on it. That means:
	
	\begin{enumerate}
		\item From the third item of the above definition of $U$ and the fact that $U$ never halts on $\langle M_h, w_h \rangle$, we can conclude Turing machine $M_h$ never halts on the input $w_h$, so the statement: ``$M_h$ halts on input $w_h$'', is not a deducible expression of ZFC, i.e., $\varphi$ cannot be proved in ZFC.
		\item From the fourth item of the above dedinition of $U$ and the fact that $U$ never halts on $\langle M_h, w_h \rangle$, the statement: ``$M_h$ does not halt on input $w_h$'' , is not a deducible expression of ZFC, i.e.,  $\neg \varphi$ cannot be proved in ZFC.
	\end{enumerate}

	From the G\"odel's completeness theorem\cite{dm} that means there is a model of ZFC, $\mathcal{A}$ , $\varphi$ is ture in it, and there is another model of ZFC, $\mathcal{B}$,
	$\neg \varphi$ is ture in $\mathcal{B}$.

\end{proof}

Let strings in $\{0,1\}^*$ be used to represent the nonnegative integers in the familiar binary notation.

\begin{defn}
	
	$s = a_1 a_2\ldots a_n \in \{0,1\}^*$, let $\|s\|$ denotes the length of the string $s$, (i.e., $\|s\| = n$) and
	
	$\textbf{num}(s) = a_1 \cdot 2^{n-1} + a_2 \cdot 2^{n-2} + \ldots + a_n .$

\end{defn}

\begin{defn}
	\label{oms}
	For any Turing machine $M$ and an input $w$, we can define a Turing machine $O_{\langle M, w \rangle}$ on $\{0,1\}^*$ as:\\$O_{\langle M, w \rangle}$ = ``On input $s = a_1 a_2\ldots a_n \in \{0,1\}^*$:
	
	\begin{enumerate}
		\item Compute the length of the string $s$: n = $\|s\|$.
		\item Simulate  $M$  on input  $w$.
		\item if $M$ halts within $n$ steps, return 0, halt; else return 1, halt .''
	\end{enumerate}
\end{defn}

\begin{lemma}
	\label{lemma_002}
	For all $O_{\langle M, w \rangle}$ defined as definition \ref{oms}, the following statements can be proved in ZFC:
	\begin{enumerate}
		\item  \label{lemma_002_01} $O_{\langle M, w \rangle} \in \textbf{P}$
		\item  \label{lemma_002_02} $  (\forall s (s \in \{0,1\}^*) \rightarrow ( O_{\langle M, w \rangle}(s) = 1) ) $, if and only if ``$M$ does not halt on input $w$''.
		\item  \label{lemma_002_03} $  (\exists m(\|s\| < m \rightarrow O_{\langle M, w \rangle}(s)=1 ) \wedge (\|s\| \ge m \rightarrow O_{\langle M, w \rangle}(s)=0) ) $, if and only if ``$M$ halts at the $m$th step on the input $w$''.
		\item  \label{lemma_002_04} $ (\exists m(\|s\| < m \rightarrow O_{\langle M, w \rangle}(s)=1 ) \wedge (\|s\| \ge m \rightarrow O_{\langle M, w \rangle}(s)=0) ) \vee (\forall s  ( O_{\langle M, w \rangle}(s) = 1) ) $
		\item  \label{lemma_002_05} $\|s\|> \|r\| \rightarrow ( O_{\langle M, w \rangle}(s)=1 \rightarrow  O_{\langle M, w \rangle}(r)=1  )$
		\item  \label{lemma_002_06} $\|s\| = \|r\| \rightarrow (  O_{\langle M, w \rangle}(s) = O_{\langle M, w \rangle}(r) ) $
	\end{enumerate}
\end{lemma}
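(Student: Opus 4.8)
The plan is to reduce all six statements to one structural fact about $O_{\langle M,w\rangle}$, provable in ZFC straight from Definition~\ref{oms}, and to settle the polynomial-time bound by a separate routine running-time estimate. The key observation is that for any $s\in\{0,1\}^*$ with $n=\|s\|$,
\[
O_{\langle M, w \rangle}(s)=
\begin{cases}
0 & \text{if }M\text{ halts on }w\text{ within }n\text{ steps,}\\
1 & \text{otherwise.}
\end{cases}
\]
This is immediate from unwinding the three steps of Definition~\ref{oms}: step~1 computes $n$, step~2 runs a computation of $M$ on $w$ that does not involve $s$ at all, and step~3 merely compares the halting time of that computation with $n$. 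Write $\mathrm{Halt}(n)$ for the decidable predicate ``$M$ halts on $w$ within $n$ steps''. ZFC proves that $\mathrm{Halt}$ is monotone, i.e.\ $n\le n'$ and $\mathrm{Halt}(n)$ imply $\mathrm{Halt}(n')$, because once $M$ enters the state $q_{halt}$ it stays there (Definition~\ref{defn_classical_tm}); so $O_{\langle M,w\rangle}(s)$ depends only on $\|s\|$ and equals $1$ exactly when $\neg\mathrm{Halt}(\|s\|)$.

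From here the last five items are short. The sixth is immediate, since the output depends only on $\|s\|$. The fifth is the contrapositive of monotonicity: $O_{\langle M,w\rangle}(s)=1$ gives $\neg\mathrm{Halt}(\|s\|)$, and $\|r\|<\|s\|$ then gives $\neg\mathrm{Halt}(\|r\|)$, that is $O_{\langle M,w\rangle}(r)=1$. For the second item, ``$M$ does not halt on $w$'' is $\forall n\,\neg\mathrm{Halt}(n)$, which by the observation is exactly $\forall s\,(O_{\langle M,w\rangle}(s)=1)$; the only point to note for the nontrivial direction is that for every $n$ there is a string of length $n$, e.g.\ $1^n$. For the third item, if $M$ first halts at step $m$ then $\mathrm{Halt}(n)\leftrightarrow n\ge m$, hence $O_{\langle M,w\rangle}(s)=0\leftrightarrow\|s\|\ge m$, which is the displayed formula; conversely that formula pins down $\mathrm{Halt}$ at the lengths $m-1$ and $m$, again using that strings of those lengths exist, and so forces $M$ to halt exactly at step $m$. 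The fourth item then follows from the dichotomy, itself provable in ZFC, ``$\exists m\,(M$ halts at step $m$ on $w)$'' or ``$\forall n\,\neg\mathrm{Halt}(n)$'': in the first case take the least such $m$ (well-ordering of $\omega$) and apply the forward direction of the third item, in the second case apply the second item.

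It remains to bound the running time for the first item. On an input of length $n$: computing $n$ costs $O(n)$ steps; simulating one step of $M$ on $w$ costs $O(|w|+n)$ steps, since after $k\le n$ steps the non-blank portion of $M$'s tape has length at most $|w|+k$; and $M$ is simulated for at most $n$ steps. Hence $t_{O_{\langle M,w\rangle}}(n)=O(n\cdot(|w|+n))=O(n^2)$, since $|w|$ is a constant fixed inside the machine, so $O_{\langle M,w\rangle}\in\textbf{P}$.

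The main obstacle here is conceptual rather than computational. Items two through four concern statements whose truth value can vary from model to model --- ``$M$ halts on $w$'' may be independent of ZFC, as in Lemma~\ref{lemma_001} --- yet the claim is that the biconditionals and the disjunction are themselves theorems of ZFC. The care needed is to phrase each item as a ZFC-provable equivalence of the shape ``(arithmetic statement about $O_{\langle M,w\rangle}$) $\leftrightarrow$ (arithmetic statement about $M$ and $w$)'', valid uniformly in all models, rather than asserting either side; in particular the fourth item must be obtained from the ZFC-provable instance of excluded middle together with the well-ordering of $\omega$, and not from any decision about whether $M$ halts.
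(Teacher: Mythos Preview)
Your proof is correct and follows exactly the approach the paper intends: the paper's own proof is the single sentence ``It is obviously from the definition of $O_{\langle M, w \rangle}$,'' and your argument simply unpacks that obviousness item by item, with a routine $O(n^2)$ time bound for item~\ref{lemma_002_01} and the monotone $\mathrm{Halt}$ predicate driving the rest.
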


\begin{proof}
	It is obviously from the definition of $O_{\langle M, w \rangle}$.
\end{proof}

From the lemma \ref{lemma_002}, it is easy to get:

\begin{corollary}
	\label{cor_001}
	For all $\langle M, w \rangle$, where $M$  is a Turing machine and  $w$ is a input string:
	\begin{enumerate}
		\item \label{cor_001_01} The statement: ``$M$ does not halt on input $w$'', can be proved in ZFC, if and only if  $  (\forall s (s \in \{0,1\}^*) \rightarrow ( O_{\langle M, w \rangle}(s) = 1) ) $ can be proved in ZFC.
		\item \label{cor_001_02} The statement: ``$M$ halts on input $w$'', can be proved in ZFC, if and only if  $  (\exists m(\|s\| < m \rightarrow O_{\langle M, w \rangle}(s)=1 ) \wedge (\|s\| \ge m \rightarrow O_{\langle M, w \rangle}(s)=0) ) $ can be proved in ZFC.
		\item \label{cor_001_03} The statement: ``$M$ halts on input $w$, or, $M$ does not halt on input $w$'' can be proved in ZFC.
	\end{enumerate}
	
\end{corollary}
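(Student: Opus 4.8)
The plan is to treat Corollary~\ref{cor_001} as a routine transfer of the metatheoretic equivalences established in Lemma~\ref{lemma_002} into statements about what ZFC proves. The only closure property of the deducible set $D$ that I will invoke is closure under propositional consequence: if ZFC proves a biconditional $\alpha \leftrightarrow \beta$ between two \emph{sentences}, then ZFC $\vdash \alpha$ if and only if ZFC $\vdash \beta$ (apply modus ponens to the two implications). No hypothesis on the consistency of ZFC is needed for the corollary itself; consistency only matters if one wants the stated equivalences to be non-vacuous.

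For part~\ref{cor_001_01} I would instantiate this observation with $\alpha$ the sentence $\forall s\,(s\in\{0,1\}^* \to O_{\langle M,w\rangle}(s)=1)$ and $\beta$ the sentence ``$M$ does not halt on input $w$''. Lemma~\ref{lemma_002}(\ref{lemma_002_02}) asserts exactly that ZFC proves $\alpha\leftrightarrow\beta$, so ZFC $\vdash\alpha$ iff ZFC $\vdash\beta$, which is the claim. For part~\ref{cor_001_02} the same move applies with the threshold sentence $\theta(m) := \forall s\,( (\|s\|<m \to O_{\langle M,w\rangle}(s)=1) \wedge (\|s\|\ge m \to O_{\langle M,w\rangle}(s)=0) )$, which has only $m$ free: Lemma~\ref{lemma_002}(\ref{lemma_002_03}) gives, inside ZFC, that $\exists m\,\theta(m)$ is equivalent to ``there is an $m$ such that $M$ halts at its $m$th step on $w$'', and the latter is provably equivalent in ZFC to ``$M$ halts on input $w$''; chaining these, ZFC proves $\exists m\,\theta(m)$ and ``$M$ halts on input $w$'' equivalent, and the closure property finishes it. Note that the ``only if'' direction here does not require extracting a numeral witness from a provable existential — we merely transport a proof across a provable biconditional.

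For part~\ref{cor_001_03} there are two routes. The immediate one is that ``$M$ halts on input $w$, or $M$ does not halt on input $w$'' is an instance of excluded middle, hence a theorem of ZFC. The route that stays within the $O$-machinery, which I would present to keep the section self-contained, is: Lemma~\ref{lemma_002}(\ref{lemma_002_04}) yields ZFC $\vdash (\exists m\,\theta(m)) \vee \forall s\,(O_{\langle M,w\rangle}(s)=1)$, and then parts~\ref{lemma_002_02} and~\ref{lemma_002_03} of Lemma~\ref{lemma_002} let ZFC rewrite the first disjunct as ``$M$ halts on input $w$'' and the second as ``$M$ does not halt on input $w$'', so the disjunction is a theorem of ZFC.

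I do not expect a genuine obstacle. The only point needing care is bookkeeping: in each application of Lemma~\ref{lemma_002} the transferred object must be read as a sentence, since otherwise ``provable in ZFC'' is not meaningful. In particular, Lemma~\ref{lemma_002}(\ref{lemma_002_03}) should be read as the schema asserting that for each numeral $m$ the sentence $\theta(m)$ is ZFC-provably equivalent to ``$M$ halts at its $m$th step on $w$''; one then closes this under $\exists m$ and matches the result against the provably equivalent sentence ``$M$ halts on input $w$'' before applying the propositional-closure observation.
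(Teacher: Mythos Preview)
Your proposal is correct and follows essentially the same approach as the paper: each part of the corollary is obtained by invoking the corresponding item of Lemma~\ref{lemma_002} (items~\ref{lemma_002_02}, \ref{lemma_002_03}, and~\ref{lemma_002_04} respectively), with part~\ref{cor_001_03} also noted as an instance of $\phi\vee\neg\phi$. The paper's proof is terser---it merely says each statement ``corresponds to'' the relevant lemma item---whereas you make explicit the underlying closure of ZFC-provability under provable biconditionals and handle the $\exists m$ bookkeeping in part~\ref{cor_001_02} more carefully, but the substance is the same.
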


\begin{proof}
	It is easy to see:
	
	\begin{center}
		The statement \textbf{\ref{cor_001_01}} corresponds to the statement \textbf{\ref{lemma_002_02}} of lemma \ref{lemma_002}.
		
		The statement \textbf{\ref{cor_001_02}} corresponds to the statement \textbf{\ref{lemma_002_03}} of lemma \ref{lemma_002}.
		
		The statement \textbf{\ref{cor_001_03}} is a $\phi \vee \neg \phi $ form, so it is ture, and corresponds to the statement \textbf{\ref{lemma_002_04}} of lemma \ref{lemma_002}.
	\end{center}
	
\end{proof}

\begin{lemma}
	\label{lemma_003}
	There exists a Turing machine $M_h$ and an input $w_h$, let ${\phi}_1$ denotes the statement $  (\forall s (s \in \{0,1\}^*) \rightarrow ( O_{\langle M_h, w_h \rangle}(s) = 1) ) $, and 
	${\phi}_2$ denotes the statement $  (\exists m(\|s\| < m \rightarrow O_{\langle M_h, w_h \rangle}(s)=1 ) \wedge (\|s\| \ge m \rightarrow O_{\langle M_h, w_h \rangle}(s)=0) ) $ then
	
	\begin{enumerate}
		\item ${\phi}_1 \vee {\phi}_2$ can be proved in ZFC,
		\item ${\phi}_1$ cannot be proved in ZFC,
		\item ${\phi}_2$ cannot be proved in ZFC,
		\item ${\phi}_1$ cannot be disproved in ZFC,
		\item ${\phi}_2$ cannot be disproved in ZFC,
		\item There exist a model of ZFC, $\mathcal{A}$, in which ${\phi}_1$ is true,
		\item There exist a model of ZFC, $\mathcal{B}$, in which ${\phi}_2$ is true.
		
	\end{enumerate}
\end{lemma}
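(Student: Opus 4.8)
The plan is to assemble this lemma directly from the pieces already in place, using Lemma~\ref{lemma_001} as the source of independence and Corollary~\ref{cor_001} (equivalently Lemma~\ref{lemma_002}) as the translation device that carries that independence from the halting statement $\varphi$ over to the arithmetical-looking statements $\phi_1$ and $\phi_2$. First I would fix the very same $M_h$ and $w_h$ produced by Lemma~\ref{lemma_001}, so that $\varphi$ is the sentence ``$M_h$ halts on $w_h$'' and $\neg\varphi$ is ``$M_h$ does not halt on $w_h$'', and both are unprovable in ZFC while each holds in some model. Then $\phi_1$ is exactly the statement appearing in part~\ref{lemma_002_02} of Lemma~\ref{lemma_002} instantiated at $\langle M_h,w_h\rangle$, and by part~\ref{cor_001_01} of Corollary~\ref{cor_001} it is ZFC-provably equivalent to $\neg\varphi$; similarly $\phi_2$ is the statement of part~\ref{lemma_002_03}, which by part~\ref{cor_001_02} is ZFC-provably equivalent to $\varphi$. (A small point to check: $\phi_2$ as written has a free variable $s$, so I would read it, as the corollary evidently does, with an implicit universal closure over $s\in\{0,1\}^*$, matching the phrasing ``$M$ halts at the $m$th step''.)

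With those two equivalences in hand the seven clauses fall out mechanically. For clause~1, $\phi_1\vee\phi_2$ is ZFC-equivalent to $\neg\varphi\vee\varphi$, which is a propositional tautology and hence provable in ZFC — this is precisely part~\ref{cor_001_03} of the corollary, or alternatively part~\ref{lemma_002_04} of Lemma~\ref{lemma_002} read at $\langle M_h,w_h\rangle$. For clauses 2 and 5: if ZFC proved $\phi_1$ it would prove $\neg\varphi$, contradicting Lemma~\ref{lemma_001}(2); if ZFC proved $\neg\phi_2$ it would prove $\neg\varphi$ (since $\phi_2\leftrightarrow\varphi$ gives $\neg\phi_2\leftrightarrow\neg\varphi$), again contradicting Lemma~\ref{lemma_001}(2). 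Symmetrically, clauses 3 and 4 reduce to Lemma~\ref{lemma_001}(1): a ZFC-proof of $\phi_2$ yields a proof of $\varphi$, and a ZFC-proof of $\neg\phi_1$ yields a proof of $\varphi$. Finally, clauses 6 and 7 come from the models: in the model $\mathcal{A}$ of Lemma~\ref{lemma_001} where $\varphi$ holds, the ZFC-provable equivalence $\phi_2\leftrightarrow\varphi$ forces $\phi_2$ to hold in $\mathcal{A}$; in the model $\mathcal{B}$ where $\neg\varphi$ holds, the equivalence $\phi_1\leftrightarrow\neg\varphi$ forces $\phi_1$ to hold in $\mathcal{B}$. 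Relabelling the models so that $\phi_1$ is true in $\mathcal{A}$ and $\phi_2$ in $\mathcal{B}$ (as the statement requests) is just bookkeeping.

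I do not anticipate a serious obstacle: the entire argument is a transfer of Lemma~\ref{lemma_001} through the provable biconditionals of Lemma~\ref{lemma_002}/Corollary~\ref{cor_001}, together with soundness (to move from provability-failure to model-existence, already supplied by Gödel completeness in Lemma~\ref{lemma_001}) and the trivial validity of $\varphi\vee\neg\varphi$. The only place demanding care is making sure the biconditionals are genuinely \emph{provable in ZFC}, not merely true — but that is exactly what Lemma~\ref{lemma_002} asserts ("the following statements can be proved in ZFC"), and it is justified there by the transparent behaviour of the machine $O_{\langle M,w\rangle}$, whose output on inputs of length $n$ literally records whether $M$ has halted on $w$ within $n$ steps. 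So the proof I would write is essentially: instantiate, invoke Corollary~\ref{cor_001} for the two equivalences, and then read off all seven items from Lemma~\ref{lemma_001} and the tautology $\varphi\vee\neg\varphi$.
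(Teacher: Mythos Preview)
Your proposal is correct and follows essentially the same route as the paper: fix $M_h,w_h$ from Lemma~\ref{lemma_001}, use the ZFC-provable equivalences of Lemma~\ref{lemma_002}/Corollary~\ref{cor_001} to identify $\phi_1\leftrightarrow\neg\varphi$ and $\phi_2\leftrightarrow\varphi$, and read off all seven items. The paper derives clauses~4 and~5 by first noting $\neg\phi_1\leftrightarrow\phi_2$ and then invoking the already-established unprovability of $\phi_2,\phi_1$, whereas you go directly through $\varphi$, but this is the same argument; your remark about relabelling the models is also apt, since the paper silently swaps the names $\mathcal{A},\mathcal{B}$ between Lemma~\ref{lemma_001} and Lemma~\ref{lemma_003}.
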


\begin{proof}
	From the \textbf{definition} \ref{oms}, it is easy to see ${\phi}_1 \vee {\phi}_2$ can be proved in ZFC.
	From the lemma \ref{lemma_001}, there exists a Turing machine $M_h$ and an input $w_h$:  
	
	\begin{enumerate}
		\item ``$M_h$ does not halt on input $w_h$'' cannot be proved in ZFC.
		\item  ``$M_h$ halts on input $w_h$'' cannot be proved in ZFC.
		\item There exist a model of ZFC, $\mathcal{A}$, ``$M_h$ does not halt on input $w_h$'' is ture in it.
		\item There exist a model of ZFC, $\mathcal{B}$, ``$M_h$ halts on input $w_h$'' is ture in it.
	\end{enumerate}

	Then from the corollary\ref{cor_001}, the corresponding formulas cannot be proved in ZFC, that is ${\phi}_1$ cannot be proved in ZFC, and ${\phi}_2$ cannot be proved in ZFC. From the definition of     $ O_{\langle M_h, w_h \rangle}(s)$, it is easy to that $ \neg {\phi}_1 \leftrightarrow {\phi}_2 $. Hence ${\phi}_1$, ${\phi}_2$ cannot be disproved in ZFC.
	
	From the lemma \ref{lemma_002},  ``$M_h$ does not halt on input $w_h$'' is ture in  $\mathcal{A}$ deduce ${\phi}_1$ is true in $\mathcal{A}$; ``$M_h$ halts on input $w_h$'' is ture in $\mathcal{B}$ deduce ${\phi}_2$ is true in $\mathcal{B}$

\end{proof}

Let abbreviate the formula $ (\forall s (s \in \{0,1\}^*) \rightarrow ( M(s) = 1) ) $ to $ \forall s  ( M(s) = 1) $.

\begin{corollary}
	\label{cor_of_lemma_03}
	There exists a Turing machine $M$ that it halts on every input and the following five formulas can be proved in ZFC:
	\begin{equation}
		\forall r,s \in \{0,1\}^*\   (\ \|s\| = \|r\| \rightarrow (  M(s) = M(r) )\ )
	\end{equation}
	
	\begin{equation}
		\forall r,s \in \{0,1\}^*\   (\ \|s\|> \|r\|) \rightarrow ( M(s)=1 \rightarrow  M(r)=1  )
	\end{equation}
	
	\begin{equation}
	\forall r,s \in \{0,1\}^*\   (\ \|s\|> \|r\|) \rightarrow ( M(r)=0 \rightarrow  M(s)=0  )
	\end{equation}

	\begin{equation}
	\forall s \in \{0,1\}^*\   (\  M(s)=0 \vee  M(s)=1  )
	\end{equation}
	
	\begin{equation}
	(\forall s (s \in \{0,1\}^*) \rightarrow ( M(s) = 1) ) \vee (\exists m(\|s\| < m \rightarrow M(s)=1 ) \wedge (\|s\| \ge m \rightarrow M(s)=0) )
	\end{equation}

	but the formula $ \forall s  ( M(s) = 1) $ is independent of ZFC, i.e., it cannot be proved in ZFC and its negation is also unprovable in ZFC.
\end{corollary}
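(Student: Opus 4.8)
The plan is to instantiate the machine already built in Lemma~\ref{lemma_003}: let $\langle M_h, w_h \rangle$ be the pair furnished by that lemma (which ultimately comes from the universal‑machine diagonalization of Lemma~\ref{lemma_001}), and set $M := O_{\langle M_h, w_h \rangle}$, the machine of Definition~\ref{oms} associated to this pair. Everything then reduces to unwinding what Lemmas~\ref{lemma_002} and~\ref{lemma_003} already supply. First I would note that $M$ halts on every input: by Definition~\ref{oms}, on input $s$ it computes $n=\|s\|$, simulates $M_h$ on $w_h$ for at most $n$ steps, and then halts returning $0$ or $1$; this also makes the fourth displayed formula, $M(s)=0\vee M(s)=1$, provable in ZFC (and in fact $M\in\textbf{P}$ by item~\ref{lemma_002_01} of Lemma~\ref{lemma_002}, which is more than the corollary asks).

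Next, the first, second and fifth displayed formulas are exactly items~\ref{lemma_002_06}, \ref{lemma_002_05} and~\ref{lemma_002_04} of Lemma~\ref{lemma_002} specialized to $\langle M_h, w_h\rangle$; ZFC proves them because Lemma~\ref{lemma_002} is stated as a provability claim about the generic construction $O_{\langle M,w\rangle}$, uniformly and with no dependence on the halting behavior of $M_h$. The third displayed formula is the one item not copied verbatim, but it follows inside ZFC by combining the second and fourth: if $\|s\|>\|r\|$ and $M(r)=0$, then $M(r)\ne 1$ by the fourth formula applied to $r$, whence the contrapositive of the second formula gives $M(s)\ne 1$, and the fourth formula applied to $s$ forces $M(s)=0$.

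Finally, the formula $\forall s\,(M(s)=1)$ is precisely $\phi_1$ of Lemma~\ref{lemma_003} under this choice of $\langle M_h, w_h\rangle$, so items~2 and~4 of that lemma say it is neither provable nor disprovable in ZFC, while items~6 and~7 exhibit the models $\mathcal{A}$ and $\mathcal{B}$ (using $\neg\phi_1\leftrightarrow\phi_2$) that witness genuine independence. This closes the argument.

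I do not expect a real obstacle: the corollary is essentially a repackaging of Lemmas~\ref{lemma_002} and~\ref{lemma_003}. The only points needing a little care are that the five displayed formulas must be \emph{provable} in ZFC rather than merely true in some model — which is why it matters that Lemma~\ref{lemma_002} was phrased as a provability statement about the generic gadget $O_{\langle M,w\rangle}$ — and that the third displayed formula, being the lone item absent from Lemma~\ref{lemma_002}, really does go through via the short contrapositive deduction above instead of requiring any case split on whether $M_h$ halts on $w_h$.
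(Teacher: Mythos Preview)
Your proposal is correct and follows exactly the paper's approach: set $M := O_{\langle M_h, w_h \rangle}$ and read off the conclusions from Lemmas~\ref{lemma_002} and~\ref{lemma_003}. The paper's own proof is a one-liner (``Let the machine $M$ be $O_{\langle M_h, w_h \rangle}$ as mentioned in Lemma~\ref{lemma_003}. Hence it is obvious.''), so you have simply unpacked the ``obvious'' part, including the contrapositive derivation of the third displayed formula from the second and fourth.
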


\begin{proof}
	Let the machine $M$ is the $O_{\langle M_h, w_h \rangle}$ as mentioned in \textbf{lemma} \ref{lemma_003}. Hence it is obvious.
\end{proof}

\begin{lemma}
	\label{lemma_004}
	Let $L_1 \subseteq \{0,1\}^*, L_2 \subseteq \{0,1\}^* $. \\If  $  \exists m \forall s (s \in \{0,1\}^*) \rightarrow (\|s\| \ge m \rightarrow (s \in L_1 \leftrightarrow s \in L_2 ) ) $,\\then  $L_1 \in \textbf{P} \rightarrow L_2 \in \textbf{P}$.
\end{lemma}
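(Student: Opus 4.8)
The plan is to exploit the fact that the hypothesis asserts exactly that $L_1$ and $L_2$ differ only on the finite set of strings of length less than $m$; a polynomial-time decider for $L_2$ can then be obtained from one for $L_1$ by ``patching'' it with a finite lookup table on the short strings. So I would assume $L_1 \in \textbf{P}$ and fix a deterministic Turing machine $M_1$ that decides $L_1$ with $t_{M_1}(n) \le p(n)$ for some polynomial $p$ (in the sense of Definition \ref{time_complexity}). By the hypothesis, fix a number $m$ such that every string $s$ with $\|s\| \ge m$ satisfies $s \in L_1 \leftrightarrow s \in L_2$, and set $E = \{\, s \in \{0,1\}^* : \|s\| < m \text{ and } s \in L_2 \,\}$. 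Since $E$ has at most $2^m$ elements, it is finite, hence decidable: there is a Turing machine $T$ that on input $s$ reads $s$, compares it against the finitely many strings of $E$ that are hard-coded into its transition table, and halts with the correct verdict, all in time $O(\|s\|)$.

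Next I would define $M_2$ as follows: on input $s \in \{0,1\}^*$, compute $n = \|s\|$; if $n < m$, run $T$ on $s$ and return its output; if $n \ge m$, simulate $M_1$ on $s$ and return its output. Since $T$ and $M_1$ halt on all inputs, so does $M_2$.

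To finish, I would verify correctness and the time bound. Correctness: if $\|s\| < m$, then $M_2$ accepts $s$ iff $s \in E$ iff $s \in L_2$; if $\|s\| \ge m$, then $M_2$ accepts $s$ iff $M_1$ accepts $s$ iff $s \in L_1$ iff (by the choice of $m$) $s \in L_2$. Hence $M_2$ decides $L_2$. Running time: computing $n$ and testing $n < m$ costs $O(n)$; in the short-string branch the extra work is $O(m) = O(1)$, since $T$ runs in time $O(n) \le O(m)$ there; in the long-string branch the simulation costs at most $c\cdot p(n)$ with $c$ a constant absorbing the simulation overhead. Thus $t_{M_2}(n) \le O(n) + c\,p(n)$ is polynomial, so $M_2 \in \textbf{P}$ and therefore $L_2 \in \textbf{P}$.

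The argument is essentially routine; the only point worth flagging is that the passage from $M_1$ to $M_2$ is non-uniform in the auxiliary data $(m, E)$ — we only need such an $m$ and such a finite table to \emph{exist} in order to conclude $L_2 \in \textbf{P}$, and we do not extract them effectively from $L_1$ together with the hypothesis. This is harmless for the present lemma, whose conclusion is a pure existence statement, but it is precisely the feature that later permits the witnessing polynomial-time machine to vary with the model of ZFC, so I would phrase and prove the lemma with that non-uniformity in plain view rather than attempting to make the construction uniform.
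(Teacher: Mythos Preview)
Your proof is correct and follows essentially the same approach as the paper's own proof: use the polynomial-time decider for $L_1$ on inputs of length at least $m$ and a finite lookup table on shorter inputs. Your version is simply more explicit about the time analysis and about the non-uniformity of the construction, which the paper's proof leaves implicit.
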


\begin{proof}
	This lemma says that if there exist a number $m$, any string $s  \in \{0,1\}^*$ with $\|s\| \ge m$ imply: ``$s \in L_1$ if and only if $s \in L_2$'', then from the $L_1 \in \textbf{P}$, we can deduce  $L_2 \in \textbf{P}$. 
	
	Let $M$ is the polynomial time deterministic Turing machine which decide $L_1$. Obviously, from the lemma's condition, if $\|s\| \ge m$, $M$ can be used to decide whether $s \in L_2$, else if $ \|s\| < m $, seaching table method can be used to decide whether $s \in L_2$. Hence $L_2 \in \textbf{P}$.

\end{proof}

\begin{defn}
	\label{apeq}
	Let $L_1 \subseteq \{0,1\}^*, L_2 \subseteq \{0,1\}^* $, we define $L_1 \simeq L_2$ as \\$  \exists m \forall s (s \in \{0,1\}^*) \rightarrow (\|s\| \ge m \rightarrow (s \in L_1 \leftrightarrow s \in L_2 ) )  $.
\end{defn}

It is easy to see that the ``$ \simeq $'' is an equivalence relation, and the following result is easily derived from the lemma \ref{lemma_004}.

\begin{corollary}
	\label{cor_002}
		Let $L_1 \subseteq \{0,1\}^*, L_2 \subseteq \{0,1\}^* $. \\If $L_1 \simeq L_2$ then  $L_1 \in \textbf{P} \leftrightarrow L_2 \in \textbf{P}$, and $L_1 \in \textbf{NP} \leftrightarrow L_2 \in \textbf{NP}$.
\end{corollary}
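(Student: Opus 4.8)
The plan is to reduce everything to Lemma~\ref{lemma_004} together with the symmetry of the relation $\simeq$. First I would record the easy fact that $\simeq$ is an equivalence relation: reflexivity and symmetry are immediate from Definition~\ref{apeq} (take $m=0$ for reflexivity, and simply swap the roles of $L_1$ and $L_2$ in the defining formula for symmetry), while transitivity follows by taking the maximum of the two witnessing thresholds. In particular $L_1 \simeq L_2$ implies $L_2 \simeq L_1$.

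For the $\textbf{P}$ half: applying Lemma~\ref{lemma_004} to the ordered pair $(L_1, L_2)$ gives $L_1 \in \textbf{P} \rightarrow L_2 \in \textbf{P}$, and applying it to the ordered pair $(L_2, L_1)$ — legitimate because $L_2 \simeq L_1$ — gives $L_2 \in \textbf{P} \rightarrow L_1 \in \textbf{P}$. Conjoining these two implications yields $L_1 \in \textbf{P} \leftrightarrow L_2 \in \textbf{P}$.

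For the $\textbf{NP}$ half I would first establish the exact analogue of Lemma~\ref{lemma_004} with $\textbf{NP}$ in place of $\textbf{P}$, and then repeat the symmetry argument verbatim. The proof of the analogue is the same finite-patch idea used for Lemma~\ref{lemma_004}: if $N$ is a nondeterministic polynomial-time machine deciding $L_1$ and $m$ witnesses the hypothesis, build a machine $N'$ that on input $s$ first computes $\|s\|$; if $\|s\| < m$ it decides membership in $L_2$ by consulting a finite lookup table hard-wired into its transition function (there are only finitely many strings of length $< m$, so this costs a constant number of steps), and if $\|s\| \ge m$ it simulates $N$ on $s$. Since $\|s\| \ge m$ forces $s \in L_1 \leftrightarrow s \in L_2$, the machine $N'$ decides $L_2$, and its running time is that of $N$ plus a constant, hence still polynomial. (Alternatively one may phrase this with the verifier definition of $\textbf{NP}$, patching the verifier to accept or reject outright on the finitely many short inputs.)

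I do not expect a genuine obstacle here; the only points meriting a sentence of care are (i) that the lookup table for the finitely many strings of length $< m$ really is a finite, hence machine-encodable, object, so $N'$ exists as a concrete Turing machine, and (ii) that the whole construction is a finite syntactic manipulation of machine descriptions and is therefore carried out inside ZFC without difficulty. Granting these, the corollary follows at once by combining the $\textbf{P}$ and $\textbf{NP}$ analogues of Lemma~\ref{lemma_004} with the symmetry of $\simeq$.
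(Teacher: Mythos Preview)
Your proposal is correct and follows precisely the route the paper intends: the paper's own proof is the single word ``Obviously,'' relying on the preceding remark that $\simeq$ is an equivalence relation and that the result is easily derived from Lemma~\ref{lemma_004}. You have simply spelled out those implicit details (symmetry of $\simeq$, applying Lemma~\ref{lemma_004} in both directions, and the straightforward $\textbf{NP}$ analogue via the same finite-patch argument), so there is no substantive difference in approach.
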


\begin{proof}
	Obviously.
\end{proof}

\begin{defn}
	\label{qms}
	Let $M_1$ is a Turing machine, and $w$ is an input string of $M_1$, $M_2$ is a Turing machine on $\{0,1\}^*$ and return $0$ or $1$, then $Q_{\langle M_1,M_2,w \rangle}$ is a Turing machine defined on $\{0,1\}^*$ as:\\$Q_{\langle M_1,M_2,w \rangle}$ = ``On input $s \in \{0,1\}^*$:
	
	\begin{enumerate}
		\item Use the machine $O_{\langle M_1, w \rangle}$ to compute on $s$.
		\item If $O_{\langle M_1, w \rangle}(s) = 1$, return 1, and halt.
		\item Else if $O_{\langle M_1, w \rangle}(s) = 0$, use the machine $M_2$ to compute on $s$ and return as $M_2$ return, and halt.''
	\end{enumerate}

	$L_{\langle M_1,M_2,w \rangle} = \{s\in\{0,1\}^*| Q_{\langle M_1,M_2,w \rangle}(s) = 1  \}$
\end{defn}

\begin{theorem}
	\label{pnp01}
	 There exists a language $L \subseteq \{0,1\}^*$, that the statement $L \in \textbf{P}$ is independent of ZFC.
\end{theorem}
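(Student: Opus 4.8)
The plan is to feed the undecidable halting ``switch'' of Lemma \ref{lemma_003} into the construction of Definition \ref{qms}, taking as the payload machine some $M_2$ that provably decides a language outside $\textbf{P}$. The switch $\phi_1$ will flip $L_{\langle M_h,M_2,w_h\rangle}$ between the trivial language $\{0,1\}^*$ (in any model of ZFC where $\phi_1$ holds) and a copy of $L(M_2)$ (in any model where $\phi_2$ holds), which yields the independence of $L\in\textbf{P}$.

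First I would fix the machine $M_h$ and input $w_h$ from Lemma \ref{lemma_003}, together with the sentences $\phi_1$ (``$\forall s\, O_{\langle M_h,w_h\rangle}(s)=1$'') and $\phi_2$; recall ZFC proves $\phi_1\vee\phi_2$ and $\neg\phi_1\leftrightarrow\phi_2$, that neither $\phi_1$ nor $\phi_2$ is provable in ZFC, and that there are models $\mathcal{A}\models\phi_1$ and $\mathcal{B}\models\phi_2$. Next, invoking the time hierarchy theorem (a theorem of ZFC), I would fix a Turing machine $M_2$ over $\{0,1\}^*$ that halts on every input, outputs only $0$ or $1$, and whose language $L(M_2):=\{s:M_2(s)=1\}$ lies in $\mathrm{DTIME}(2^n)\setminus\textbf{P}$; in particular ZFC proves $L(M_2)\notin\textbf{P}$. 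Then set $Q:=Q_{\langle M_h,M_2,w_h\rangle}$ and $L:=L_{\langle M_h,M_2,w_h\rangle}$ as in Definition \ref{qms}.

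The core of the argument is three facts, each provable in ZFC directly from Definitions \ref{oms} and \ref{qms} and Lemma \ref{lemma_002}: (i) $Q$ halts on every input, since $O_{\langle M_h,w_h\rangle}\in\textbf{P}$ and $M_2$ both halt everywhere, so $L$ is a decidable language; (ii) $\phi_1\rightarrow L=\{0,1\}^*$, because $O_{\langle M_h,w_h\rangle}(s)=1$ for all $s$ forces $Q(s)=1$ for all $s$, whence ZFC also proves $\phi_1\rightarrow L\in\textbf{P}$; and (iii) $\phi_2\rightarrow L\simeq L(M_2)$, because once $\|s\|$ exceeds the threshold $m$ of $\phi_2$ we have $O_{\langle M_h,w_h\rangle}(s)=0$ and hence $Q(s)=M_2(s)$, which is exactly the relation $\simeq$ of Definition \ref{apeq}; combining (iii) with Corollary \ref{cor_002} and ``$L(M_2)\notin\textbf{P}$'', ZFC proves $\phi_2\rightarrow L\notin\textbf{P}$.

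Finally I would conclude by soundness. In $\mathcal{A}$, where $\phi_1$ holds, (ii) gives $L=\{0,1\}^*\in\textbf{P}$; in $\mathcal{B}$, where $\phi_2$ holds, (iii) with Corollary \ref{cor_002} and $L(M_2)\notin\textbf{P}$ gives $L\notin\textbf{P}$. Hence $L\in\textbf{P}$ is not provable in ZFC (it fails in $\mathcal{B}$) and $L\notin\textbf{P}$ is not provable in ZFC (it fails in $\mathcal{A}$); equivalently, a proof of $L\in\textbf{P}$ would give via (iii) a proof of $\phi_1$, and a proof of $L\notin\textbf{P}$ would give via (ii) a proof of $\phi_2$, both impossible. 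Thus $L\in\textbf{P}$ is independent of ZFC. I expect the only real care needed to be bookkeeping: every intermediate claim must be established \emph{inside} ZFC so that it survives in the nonstandard model $\mathcal{B}$, where the threshold $m$ of $\phi_2$ may be nonstandard; since Lemma \ref{lemma_002}, Lemma \ref{lemma_004}/Corollary \ref{cor_002}, and the time hierarchy theorem are all internal to ZFC, there is no substantive obstacle beyond the routine verifications that $Q$ is literally a single Turing machine and that (i)--(iii) hold.
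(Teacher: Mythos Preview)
Your proposal is correct and follows essentially the same approach as the paper: feed the pair $\langle M_h,w_h\rangle$ from Lemma~\ref{lemma_003} into the $Q$-construction of Definition~\ref{qms} with a provably non-$\textbf{P}$ payload, then use the models $\mathcal{A}\models\phi_1$ and $\mathcal{B}\models\phi_2$ together with Corollary~\ref{cor_002} to witness both $L\in\textbf{P}$ and $L\notin\textbf{P}$. Your version is in fact slightly more careful than the paper's, since you explicitly invoke the time hierarchy theorem to ensure the payload language is \emph{ZFC-provably} outside $\textbf{P}$ and you flag the issue that the threshold $m$ in $\mathcal{B}$ may be nonstandard; the paper glosses over both points.
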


\begin{proof}
	It is easy to see that there is a language  $L_1 \subseteq \{0,1\}^*, L_1 \notin \textbf{P}$. 
	
	Therefore we suppose $M_1$ is a Turing machine which decide the language $L_1$.
	
	Because $ L_1 \notin \textbf{P}$, $M_1$ is not in the polynomial time class.

	Let $\langle M_h, w_h \rangle$ is the machine and string pair as in the lemma \ref{lemma_003} that is:
	
	\begin{enumerate}
		\item $  (\forall s (s \in \{0,1\}^*) \rightarrow ( O_{\langle M_h, w_h \rangle}(s) = 1) ) $ cannot be proved in ZFC.
		\item $  (\exists m(\|s\| < m \rightarrow O_{\langle M_h, w_h \rangle}(s)=1 ) \wedge (\|s\| \ge m \rightarrow O_{\langle M_h, w_h \rangle}(s)=0) ) $ cannot be proved in ZFC.
		\item There exist a model of ZFC, $\mathcal{A}$, in which \\$  (\forall s (s \in \{0,1\}^*) \rightarrow ( O_{\langle M_h, w_h \rangle}(s) = 1) ) $ is true.
		\item There exist a model of ZFC, $\mathcal{B}$, in which \\$  (\exists m(\|s\| < m \rightarrow O_{\langle M_h, w_h \rangle}(s)=1 ) \wedge (\|s\| \ge m \rightarrow O_{\langle M_h, w_h \rangle}(s)=0) ) $ is true.
	\end{enumerate}

	Then we can define a Turing machine $M_2 = Q_{\langle M_h,M_1,w_h \rangle}$, that is:\\$M_2$ = ``On input $s \in \{0,1\}^*$:
	
	\begin{enumerate}
		\item Use the machine $O_{\langle M_h, w_h \rangle}$ to compute on $s$.
		\item If $O_{\langle M_h, w_h \rangle}(s) = 1$, return 1, and halt.
		\item Else if $O_{\langle M_h, w_h \rangle}(s) = 0$, use the machine $M_1$ to compute on $s$ and return as $M_1$ return, and halt.''
	\end{enumerate}
	
	Let $L = \{s\in\{0,1\}^*| M_2(s) = 1  \}$
	
	In model $\mathcal{A}$, because $ (\forall s (s \in \{0,1\}^*) \rightarrow ( O_{\langle M_h, w_h \rangle}(s) = 1) ) $ is true, the machine $M_2$ never execute the 3rd instruction as above definition, and from the lemma \ref{lemma_002}, $O_{\langle M_h, w_h \rangle} \in \textbf{P}$, therefore $L \in \textbf{P} $ is true in model $\mathcal{A}$.
	
	In model $\mathcal{B}$, because $  (\exists m(\|s\| < m \rightarrow O_{\langle M_h, w_h \rangle}(s)=1 ) \wedge (\|s\| \ge m \rightarrow O_{\langle M_h, w_h \rangle}(s)=0) ) $ is true, on the $s \in \{0,1\}^* \wedge \|s\| \ge m$ the machine $M_2$ must execute the 3rd instruction as above definition: use the machine $M_1$ to compute on $s$ and return as $M_1$ return. 
	
	That is $\|s\| \ge m \rightarrow M_2(s) = M_1(s)$. As suppose $M_1$ decide the language $L_1$, so $L \simeq L_1$. 
	
	Since the $L_1 \notin \textbf{P}$ as above supposed and the corollary \ref{cor_002}, $L \notin \textbf{P}$ is true in model $\mathcal{B}$.

	Hence  $L \in \textbf{P} $ is true in model $\mathcal{A}$, and $L \notin \textbf{P}$ is true in model $\mathcal{B}$. That means $L \in \textbf{P}$ is independent of ZFC.
	
\end{proof}

\begin{remark}
	The theorem \ref{pnp01} be proved by model method, indeed, it can be proved directly: if $L \in \textbf{P}$ can be proved in ZFC, then from the $L_1 \notin \textbf{P}$ and the definition of $M_2$, we can deduce $  (\forall s (s \in \{0,1\}^*) \rightarrow ( O_{\langle M_h, w_h \rangle}(s) = 1) ) $ in ZFC, contradicting the property of the selected $\langle M_h, w_h \rangle$, and if $L \notin \textbf{P}$ can be proved in ZFC, also result in a contradiction. Therefore $L \in \textbf{P}$ cannot be proved in ZFC and $L \notin \textbf{P}$ cannot be proved in ZFC either. That is $L \in \textbf{P}$ is independent of ZFC.
\end{remark}

\begin{remark}
	We can enumerate triplets $\langle M_1,M_2,w \rangle$ as the definition \ref{qms}. For some corresponding languages $L_{\langle M_1,M_2,w \rangle}$, the question: ``$L_{\langle M_1,M_2,w \rangle} \in \textbf{P} $ ?'' are easy, others are difficult. For example, let 
	
	$M_2$ is arbitrary non-$\textbf{P}$ class Turing machine, and 
	
	$M_1$ = ``On input $s \in \{0,1\}^*$:
	
	\begin{enumerate}
		\item If $\|s\|>1$, halt.
		\item If $\|s\|=1$, searching a even number $n$ satisfy $n>4$ and $n$ is not the sum of two odd primes, do not halt until found.''
	\end{enumerate}

Then the question: $L_{\langle M_1,M_2,1 \rangle} \in \textbf{P} $? is very difficult, indeed it is equivalent to Goldbach conjecture. But the theorem \ref{pnp01} says there exists language $L$, the question ``$L \in \textbf{P}$?'' is even more difficult that we cannot prove or disprove it, it is independent of ZFC.
\end{remark}

\begin{remark}
	Let  $L \in \textbf{P}$ is independent of ZFC. It is obvious that $(L \in \textbf{P}) \vee (L \notin \textbf{P})$ is true, but no one can decide whether $(L \in \textbf{P}) $ or $ (L \notin \textbf{P})$, the language $L$ is in a ``correlated'' state. More interesting quantum phenomena of computation will be study later, but now, another interesting theorem about $\textbf{P}$ versus $\textbf{NP}$ will be given in the following section.
\end{remark}

\begin{center}
	\item
	\section{A theorem about P versus NP problem}
\end{center}

In this section, a theorem about $ \textbf{P}\ \text{versus}\ \textbf{NP}$ problem be proved: there exists a language $L \in \textbf{NP}$, but for any polynomial time deterministic Turing machine $M$, we cannot prove $M$ decide $L$ in ZFC system.\\

There is an encoding system that can encode any $\langle M_1,M_2,P_1,P_2 \rangle$ as a natural number, where $M_1,M_2$ are Turing machines, $P_1,P_2$ are formula sequences, and there is a corresponding decode algorithm which can decode a number $n$ to a corresponding quadruplet $\langle M_1,M_2,P_1,P_2 \rangle$ if $n$ is an encoding number of it, else if $n$ is not an encoding number of any quadruplet $\langle M_1,M_2,P_1,P_2 \rangle$, return 0. Think of encryption and decryption algorithms in modern cryptography, this assumption is obviously true. We fixed one encoding system throughout this section, denote the encode algorithm by $T_{encode}$ and denote the corresponding decode algorithm by $T_{decode}$.\\

\begin{defn}
	\label{defn_MF}
	Let $M_0$ be a Turing machine on $\{0,1\}^*$ as in the \textbf{corollary} \ref{cor_of_lemma_03}. Now we define a language $L_{M_0},\ \text{such that }L_{M_0} \subset \omega$. Let \\
	$L_{M_0} = \{ n \in \omega |\ n = T_{encode}( \langle M_1,M_2,P_1,P_2 \rangle ), \ M_1, M_2 \text{ are Turing}\\
	\text{machines, and } P_1\text{ is a proof of }	M_1 \in \textbf{P} \text{ in ZFC, } P_2 \text{ is a proof of the following }\\
	\text{statement in ZFC: }M_2 \text{ compute a function: }\omega \to \omega \text{ and } M_0(M_2(n)) = M_1(n) \}$
	
	It is obvious that $L_{M_0}$ is a decidable language, let $H_{M_0}$ is a Turing machine which decide the language $L_{M_0}$. Now we define a Turing machine $T_{M_0}$ on $\omega$ as following:\\
	$T_{M_0}$ = ``On input $n$:
	\begin{enumerate}
		\item if $n=0$, return 0 and halts, else:
		\item use $H_{M_0}$ to decide whether $n \in L_{M_0} $,
		\item if $n \notin L_{M_0} $, return $1+T_{M_0}(n-1)$ and halts, else:
		\item if $n \in L_{M_0} $, use algorithm $T_{decode}$ to decode the number $n$ into\\
		$\langle M_1,M_2,P_1,P_2 \rangle$, and let $T_{M_0}$ return $\textbf{max}\{ 1+T_{M_0}(n-1),\ 1+M_2(n) \}$ and halts.
		
	\end{enumerate}
\end{defn}
Then the property of $T_{M_0}$ be described in following lemma:
\begin{lemma}
	\label{property_of_TM0}
	Let $M_0,T_{M_0}$ as the above \textbf{definition} \ref{defn_MF} and let the function $f(n) = M_0(T_{M_0}(n))$ then:
	\begin{enumerate}
		\item $T_{M_0}$ compute a function from $\omega$ to $\omega$,
		\item $T_{M_0}(n-1) < T_{M_0}(n) $ can be proved in ZFC,
		\item $\forall n \   (\  f(n)=0 \vee  f(n)=1  )$ can be proved in ZFC,
		\item $\forall m,n \   (m>n) \rightarrow ( f(m)=1 \rightarrow  f(n)=1  )$ can be proved in ZFC,
		\item $\forall m,n \   (m>n) \rightarrow ( f(n)=0 \rightarrow  f(m)=0  )$ can be proved in ZFC,
		\item $(\forall n  ( f(n) = 1)) \vee (\exists m(\ (n < m \rightarrow f(n)=1 ) \wedge (n \ge m \rightarrow f(n)=0)))$ can be proved in ZFC,
		\item the formula $\forall n  (f(n) = 1) $ is independent of ZFC,
		\item \label{p_t_08}for any Turing machine $M$, and any formula sequences $P_1 \text{ and } P_2$,\\
		$\text{if } n = T_{encode}( \langle M,T_{M_0},P_1,P_2 \rangle ), \text{ then } n \notin L_{M_0}$, that means if $M \in \textbf{P}$ can be proved in ZFC (i.e., the statement ``$M$ is a polynomial-time Turing machine'' can be proved in ZFC) then the statement: $$f(n)= M(n)$$ cannot be proved in ZFC.
		
	\end{enumerate}
\end{lemma}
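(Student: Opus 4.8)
My plan is to handle the eight items in three batches: the structural facts about $T_{M_0}$ (items 1--2), the properties of $f$ that are inherited from $M_0=O_{\langle M_h,w_h\rangle}$ by pulling back along the strictly increasing map $T_{M_0}$ (items 3--7), and the self-referential item 8. For item 1 I would prove by strong induction on $n$ that $T_{M_0}(n)$ halts with natural-number output: the cases $n=0$ and $n\notin L_{M_0}$ are immediate (using that $H_{M_0}$ and $T_{\mathrm{decode}}$ are total and applying the induction hypothesis to $n-1$), and in the case $n\in L_{M_0}$ with $T_{\mathrm{decode}}(n)=\langle M_1,M_2,P_1,P_2\rangle$ the only new point is that $M_2(n)$ halts, which holds because $P_2$ is a valid ZFC-proof that $M_2$ computes a function $\omega\to\omega$, hence $M_2$ is total --- here (and tacitly in the ``for all $n$'' phrasing of items 2--7) I use that ZFC is $\Sigma_1$-sound, i.e.\ $1$-consistent, so that a machine ZFC proves total is genuinely total. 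Item 2 is read directly off steps 3--4 of Definition~\ref{defn_MF}, since $T_{M_0}(n)$ there equals $1+T_{M_0}(n-1)$ or $\max\{1+T_{M_0}(n-1),1+M_2(n)\}$, both $\ge 1+T_{M_0}(n-1)>T_{M_0}(n-1)$, and ZFC formalizes this finite reasoning. From item 2 and $T_{M_0}(0)=0$ one gets $T_{M_0}(n)\ge n$, hence $\|T_{M_0}(n)\|\to\infty$, and, since in canonical binary a larger natural number has at least as many digits, $m>n$ implies $\|T_{M_0}(m)\|\ge\|T_{M_0}(n)\|$; these two facts are all that items 3--7 use about $T_{M_0}$.

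Items 3--6 then follow by substituting $s:=T_{M_0}(m)$, $r:=T_{M_0}(n)$ into the ZFC-provable formulas about $M_0$ collected in Corollary~\ref{cor_of_lemma_03}: item 3 from $\forall s\,(M_0(s)=0\vee M_0(s)=1)$; items 4 and 5 by combining the ``equal length'' formula $\|s\|=\|r\|\to M_0(s)=M_0(r)$ (for the case $\|T_{M_0}(m)\|=\|T_{M_0}(n)\|$) with the two length-monotonicity formulas of that corollary (for $\|T_{M_0}(m)\|>\|T_{M_0}(n)\|$); item 6 by case-splitting on the last formula of Corollary~\ref{cor_of_lemma_03}, namely $f\equiv 1$ in the first case, and, in the second case, observing that if $m'$ is the length threshold for $M_0$ then $\{n:\|T_{M_0}(n)\|<m'\}$ is an initial segment of $\omega$ (because $\|T_{M_0}(\cdot)\|$ is nondecreasing) that is finite (because $\|T_{M_0}(n)\|\to\infty$), so its length is the witness $m$. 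For item 7 I would prove inside ZFC the equivalence $(\forall n\,f(n)=1)\leftrightarrow(\forall s\,M_0(s)=1)$: the direction $\Leftarrow$ is the instance $s:=T_{M_0}(n)$, and for $\Rightarrow$, given $s$ one picks $n$ with $\|T_{M_0}(n)\|\ge\|s\|$ and transfers $M_0(T_{M_0}(n))=1$ down to $M_0(s)=1$ via the length formulas; since $\forall s\,(M_0(s)=1)$ is independent of ZFC by Corollary~\ref{cor_of_lemma_03}, so is $\forall n\,(f(n)=1)$.

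Item 8 is the heart of the matter. For the first half, fix a Turing machine $M$, formula sequences $P_1,P_2$, put $n=T_{\mathrm{encode}}(\langle M,T_{M_0},P_1,P_2\rangle)$, and suppose $n\in L_{M_0}$. By the definition of $L_{M_0}$, $P_2$ is then a valid ZFC-proof of a statement whose first conjunct is ``$T_{M_0}$ computes a function $\omega\to\omega$'', so ZFC proves $T_{M_0}(n)$ halts. On the other hand ``$n\in L_{M_0}$'' and ``$T_{\mathrm{decode}}(n)=\langle M,T_{M_0},P_1,P_2\rangle$'' are true, decidable facts, hence theorems of ZFC; therefore ZFC proves that on input $n$ the machine $T_{M_0}$ reaches step 4 of Definition~\ref{defn_MF} with $M_2=T_{M_0}$, so that \emph{if} $T_{M_0}(n)$ halts its output is forced to equal $\max\{1+T_{M_0}(n-1),\,1+T_{M_0}(n)\}\ge 1+T_{M_0}(n)$, which is impossible for a natural number; hence ZFC proves $T_{M_0}(n)$ does not halt, contradicting the consistency of ZFC, and so $n\notin L_{M_0}$. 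For the ``consequently'' clause, suppose ZFC proves $M\in\textbf P$ and also proves $\forall k\,(f(k)=M(k))$; since $f(k)=M_0(T_{M_0}(k))$, the second assertion (read with the convention that this equality presupposes $T_{M_0}(k)$ halts) already entails that ZFC proves ``$T_{M_0}$ computes a function $\omega\to\omega$''. Then a ZFC-proof $P_1$ of $M\in\textbf P$ together with a ZFC-proof $P_2$ of ``$T_{M_0}$ computes a function $\omega\to\omega$ $\wedge$ $\forall k\,(M_0(T_{M_0}(k))=M(k))$'' would place $T_{\mathrm{encode}}(\langle M,T_{M_0},P_1,P_2\rangle)$ in $L_{M_0}$, contradicting the first half; so $f(k)=M(k)$ cannot be a theorem of ZFC.

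The step I expect to be the real obstacle is making the first half of item 8 airtight: one has to keep $n\in L_{M_0}$ a genuinely decidable (indeed, once the encoding is fixed, primitive recursive) predicate --- so that $H_{M_0}$, and a ZFC-proof of $n\in L_{M_0}$, truly exist, since membership only tests for the \emph{existence of a proof}, which is checkable --- while at the same time the clause $1+M_2(n)=1+T_{M_0}(n)$ makes the computation of $T_{M_0}$ on $n$ literally contain a copy of itself, and then one must convert this self-containment into a bona fide ZFC-proof that $T_{M_0}(n)$ diverges, without any circular appeal to item 1. The secondary point demanding care is the $\Sigma_1$-soundness assumption behind item 1 (and hence behind the universally quantified statements of items 2--7), which is slightly stronger than the bare consistency of ZFC assumed earlier in the paper and should be flagged as a standing hypothesis.
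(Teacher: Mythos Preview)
Your proposal is correct and, for items 1--7, follows exactly the route the paper has in mind (the paper merely says ``easy to verify from Definition~\ref{defn_MF} and Corollary~\ref{cor_of_lemma_03}''; you have written out what that verification actually is). Your observation that $\Sigma_1$-soundness, not mere consistency, is what makes item~1 go through is a genuine point the paper glosses over.

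For item~8 your argument is sound but more elaborate than the paper's. The paper's proof is a two-line direct contradiction: having already secured item~1 (so $T_{M_0}(n)$ is a well-defined natural number), if $n=T_{\mathrm{encode}}(\langle M,T_{M_0},P_1,P_2\rangle)\in L_{M_0}$ then step~4 of Definition~\ref{defn_MF} gives
\[
T_{M_0}(n)=\max\{1+T_{M_0}(n-1),\,1+T_{M_0}(n)\}\ge 1+T_{M_0}(n)>T_{M_0}(n),
\]
which is absurd. Your detour through ``ZFC proves $T_{M_0}(n)$ halts'' versus ``ZFC proves $T_{M_0}(n)$ diverges'' and an appeal to consistency is unnecessary, and the circularity you worry about does not arise: once $\Sigma_1$-soundness is assumed, the totality of any $M_2$ occurring in some $n\in L_{M_0}$ follows immediately from the existence of the proof $P_2$ (independently of whether $M_2$ happens to equal $T_{M_0}$), so the induction for item~1 runs cleanly, and item~8 may then freely invoke item~1. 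What your version buys is that the first half of item~8 alone needs only consistency rather than $\Sigma_1$-soundness; but since the lemma as a whole already requires $\Sigma_1$-soundness for item~1, this gain is not cashed out.
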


\begin{proof}
	From the \textbf{definition} \ref{defn_MF} and the \textbf{corollary} \ref{cor_of_lemma_03}, it is easy to verify the statements from (1) to (7). Now we prove the statement (8) by contradiction.
	
	Assume that there exist a Turing machine $M$ and two formula sequences $P_1 \text{ and } P_2$ which satisfy $$ n = T_{encode}( \langle M,T_{M_0},P_1,P_2 \rangle ),\ n \in L_{M_0}$$ Then from the \textbf{definition} \ref{defn_MF}, $$T_{M_0}(n) =  \textbf{max}\{ 1+T_{M_0}(n-1),\ 1+T_{M_0}(n) \} \ge 1+T_{M_0}(n) > T_{M_0}(n) $$Thus we obtain a contradiction.
\end{proof}

\begin{theorem}
	\label{pvsnp}
	There exist a language $L$ which satisfies the following statements:
	\begin{enumerate}
		\item $L \in \textbf{NP} $,
		\item \label{pvsnp_con_2} For any Turing machine $T_M$, if we can prove the statement ``$T_M$ decide the language $L$'' in ZFC, then the statement ``$T_M$ is a polynomial-time Turing machine '' cannot be proved in ZFC, and vice versa.
	\end{enumerate}
\end{theorem}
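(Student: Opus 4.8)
The plan is to take
\[
L \;=\; \{\, n \in \omega \;:\; M_0(T_{M_0}(n)) = 1 \,\} \;=\; \{\, n : f(n) = 1 \,\},
\]
where $M_0$ is the machine from \textbf{corollary} \ref{cor_of_lemma_03}, and $T_{M_0}$, $f(n) = M_0(T_{M_0}(n))$ are as in \textbf{definition} \ref{defn_MF} and \textbf{lemma} \ref{property_of_TM0} (using the usual binary identification of $\omega$ with $\{0,1\}^*$, so $L$ may equally be viewed as a subset of $\{0,1\}^*$). Essentially all the work is already packaged in \textbf{lemma} \ref{property_of_TM0}: the first clause of the theorem will come from the $1^{*}0^{*}$-shape of $f$ recorded in items (3) and (6), and the second clause will come from the diagonal obstruction of item (8).

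For the first claim I would in fact prove $L \in \textbf{P}$ outright. By item (6) of \textbf{lemma} \ref{property_of_TM0}, ZFC proves $(\forall n\, f(n)=1) \vee (\exists m\, \forall n((n<m \to f(n)=1)\wedge(n\ge m \to f(n)=0)))$, so it suffices to derive $L \in \textbf{P}$ in ZFC from each disjunct. Under the first disjunct $L = \omega$, decided in constant time by the machine that always outputs $1$. Under the second disjunct, fixing a witness $m$, $L$ is the finite initial segment $\{n : n < m\}$, which ZFC sees is decided in linear time by the machine that compares $\textbf{num}(x)$ with $m$ bitwise — a decider that, together with its polynomial time bound, can be constructed uniformly in the parameter $m$ inside ZFC. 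Hence ZFC $\vdash L \in \textbf{P}$, a fortiori ZFC $\vdash L \in \textbf{NP}$. (Equivalently one may run this case split model-by-model and invoke G\"odel's completeness theorem, exactly as in the proof of \textbf{theorem} \ref{pnp01}.)

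For the second claim I would argue by contradiction. Suppose some Turing machine $T_M$ is such that ZFC proves both ``$T_M$ decides $L$'' and ``$T_M \in \textbf{P}$''. Replace $T_M$ by the machine $T_M'$ that runs $T_M$ and outputs $1$ on acceptance and $0$ on rejection; then ZFC still proves $T_M' \in \textbf{P}$ (fix a proof $P_1$), and from ``$T_M$ decides $L$'' ZFC proves $\forall n\,(T_M'(n)\in\{0,1\} \wedge (T_M'(n)=1 \leftrightarrow n\in L))$. Since $n \in L \leftrightarrow M_0(T_{M_0}(n))=1$ and, by item (3) of \textbf{lemma} \ref{property_of_TM0}, ZFC proves $M_0(T_{M_0}(n)) \in \{0,1\}$, ZFC proves $\forall n\,(T_M'(n) = M_0(T_{M_0}(n)))$; conjoining this with item (1) of \textbf{lemma} \ref{property_of_TM0} yields a ZFC-proof $P_2$ of the statement ``$T_{M_0}$ computes a function $\omega \to \omega$ and $M_0(T_{M_0}(n)) = T_M'(n)$''. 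But then, by \textbf{definition} \ref{defn_MF}, the number $n^{*} = T_{encode}(\langle T_M', T_{M_0}, P_1, P_2\rangle)$ lies in $L_{M_0}$, contradicting item (8) of \textbf{lemma} \ref{property_of_TM0}, which says that no quadruple with $T_{M_0}$ in the second slot encodes an element of $L_{M_0}$. Hence ZFC does not prove ``$T_M \in \textbf{P}$''; and since this holds for every $T_M$, its contrapositive is precisely the promised ``vice versa''.

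Because \textbf{lemma} \ref{property_of_TM0} does the heavy lifting, the remaining obstacles are bookkeeping rather than conceptual, but three points need care. First, one must carry out \emph{inside ZFC} the passage between the language-level assertion ``$T_M$ decides $L$'' and the function-level assertion ``$\forall n\, T_M'(n) = f(n)$'' — in particular normalising the output alphabet while retaining a ZFC-proof of polynomial running time, and using that both sides are Boolean so the biconditional upgrades to an equality. Second, the membership $L \in \textbf{P}$ genuinely uses the $1^{*}0^{*}$-shape of $f$ from item (6), not merely that $f$ is $\{0,1\}$-valued. Third — a caveat inherited from the earlier sections rather than a new difficulty — item (8) and the totality of $T_{M_0}$ in item (1) rely on ZFC being not only consistent but sufficiently sound (so that a ZFC-proof that a machine computes a total function actually produces a total function); under that standing assumption the argument above goes through.
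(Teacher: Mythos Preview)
Your proof is correct and in fact cleaner than the paper's. Both arguments ultimately contradict item~(8) of \textbf{lemma}~\ref{property_of_TM0}, but they reach it by different routes.

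The paper does not take $L=\{n:f(n)=1\}$ directly. Instead it first case-splits on $\textbf{P}$ versus $\textbf{NP}$; under $\textbf{P}=\textbf{NP}$ it fixes an arbitrary $L_0\in\textbf{NP}$ together with a ZFC-provably polynomial decider $T_{L_0}$, and builds $L_U=\{n:U(n)=1\}$ where $U(n)=T_{L_0}(n)$ if $f(n)=1$ and $U(n)=1-T_{L_0}(n)$ otherwise, so that $L_U\simeq L_0$ or $L_U\simeq\overline{L_0}$. The contradicting machine $M$ is then defined by $M(n)=1\Leftrightarrow T_M(n)=T_{L_0}(n)$, and one shows ZFC~$\vdash f(n)=M(n)$ and ZFC~$\vdash M\in\textbf{P}$. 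Your construction is the special case $L_0=\omega$, $T_{L_0}\equiv 1$, which collapses $L_U$ to $\{n:f(n)=1\}$ and collapses the auxiliary machine $M$ to $T_M'$ itself; this lets you dispense entirely with the $\textbf{P}$ versus $\textbf{NP}$ case split and with the intermediate language $L_0$, and it yields the stronger conclusion ZFC~$\vdash L\in\textbf{P}$ rather than merely $L\in\textbf{NP}$. What the paper's detour buys is only the observation that one can arrange $L$ to be $\simeq$-close to a prescribed $L_0\in\textbf{NP}$, a flexibility not needed for the theorem as stated. Your explicit flag about the soundness hypothesis underlying items~(1) and~(8) is well placed; the paper relies on it tacitly.
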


\begin{proof}
	If $\textbf{P} \ne \textbf{NP} $, the theorem is obvious. 
	
	So we just need to show this theorem still holds under the assumption $\textbf{P} = \textbf{NP}$. Though $L \in \textbf{P} $ under this assumption, we cannot prove any concrete polynomial-time Turing machine decide it in ZFC. Indeed, despite whether or not $\textbf{P} = \textbf{NP}$ there exist a language $L$ satisfies this theorem. \\
	
	Let $L_0 \subset \omega$ be a language, $ L_0 \in \textbf{NP}$. Since the assumption $\textbf{P} = \textbf{NP}$, we get $\textbf{NP} = \textbf{coNP}$, therefore $\overline{L_0} \in \textbf{NP}$ where $\overline{L_0} = \omega \setminus L_0$. If $L_0$ satisfies the two statements of this theorem, the theorem be proved.

	Else let there is a Turing machine $T_{L_0}$, we can prove the statement ``$T_{L_0}$ decide the language $L_0$'' in ZFC, and $T_{L_0} \in \textbf{P}$ can be proved in ZFC. It is easy to see that $1-T_{L_0}$ decide $\overline{L_0}$.

	Let $f(n)$ be defined as in the \textbf{lemma} \ref{property_of_TM0}. Now we define a Turing machine $U$ on $\omega$ as following:\\
	$U$ = ``On input $n$:
	\begin{enumerate}
		\item First, compute $f(n)$,
		\item If $f(n)=1$, compute $T_{L_0}(n)$, return $T_{L_0}(n)$, and halts,
		\item Else if $f(n)=0$, compute $1- T_{L_0}(n)$, return $1- T_{L_0}(n)$, and halts.
	\end{enumerate}
	Let $L_U$ denote the language $\{n|\ U(n)=1\}$. It is not hard to see that $$U(n) = T_{L_0}(n) \text{ if and only if } f(n)=1$$ thus we can prove the following formula in ZFC:
	
	\begin{equation}
		\label{U(n)=TL0(n)}
		U(n) = T_{L_0}(n) \leftrightarrow f(n)=1
	\end{equation}
	From the \textbf{lemma} \ref{property_of_TM0}, $f(n)$ satisfies: $$(\forall n  ( f(n) = 1)) \vee (\exists m(\ (n < m \rightarrow f(n)=1 ) \wedge (n \ge m \rightarrow f(n)=0)))$$Therefore from the definition of $U$ we can prove the following formula in ZFC:$$(\forall n  ( U(n) = T_{L_0}(n))) \vee (\exists m(\ (n < m \rightarrow U(n)=T_{L_0}(n) ) \wedge (n \ge m \rightarrow U(n)=1-T_{L_0}(n))))$$From the \textbf{definition} \ref{apeq} and above formula, we get $(L_U \simeq L_0) \vee (L_U \simeq \overline{L_0} ) $. Under the assumption $\textbf{P} = \textbf{NP}$, we know that  $ L_0 \in \textbf{NP}$ and $\overline{L_0} \in \textbf{NP}$. Since the \textbf{corollary} \ref{cor_002}, the formula $L_U \in \textbf{NP}$ can be proved in ZFC. Though $L_U \in \textbf{P}$ under the assumption $\textbf{P} = \textbf{NP}$, we now prove $L_U$ satisfies the statement \ref{pvsnp_con_2} of this theorem by contradiction.
	
	Assume that there exist a concrete Turing machine $T_M$ that we can prove the statement ``$T_M$ decide the language $L_U$'' in ZFC and ``$T_M$ is a polynomial-time Turing machine '' can be proved in ZFC, too, i.e., we assumed the following two formulas be proved in ZFC:
	
	\begin{equation}
		\label{U(n)=TM(n)}
		\forall n (U(n) = T_M(n))
	\end{equation}
	
	\begin{equation}
		T_M \in \textbf{P}
	\end{equation}
	Now, we define a polynomial-time Turing machine $M$ compute $f(n)$, which will contradict the statement \ref{p_t_08} of the \textbf{lemma} \ref{property_of_TM0}:\\
	
	$M$ = ``On input $n$:
	\begin{enumerate}
		\item Compute $T_M(n)$, (using polynomial-time as assumed)
		\item Compute $T_{L_0}(n)$, (using polynomial-time as assumed)
		\item Compare $T_M(n)$ to $T_{L_0}(n)$, if they are equal, return 1, and halts,
		\item Else, return 0, and halts.
	\end{enumerate}
	From the definition of $M$,  and $T_M \in \textbf{P}$, $T_{L_0} \in \textbf{P}$ are all proved in ZFC as assumptions, it is easy to see the following formulas can be proved in ZFC:

	\begin{equation}
		\label{MinP}
		M \in \textbf{P}
	\end{equation}
	
	\begin{equation}
		\label{Un=0or1}
		\forall n \   (\  M(n)=0 \vee  M(n)=1  )
	\end{equation}

	\begin{equation}
		\label{Mn=1}
		M(n)=1 \leftrightarrow T_M(n) = T_{L_0}(n)
	\end{equation}
	Therefore from the formula (\ref{U(n)=TM(n)}) and formula (\ref{Mn=1}), the following formula can be proved in ZFC:
	
	\begin{equation}
		\label{MntoUn}
		M(n)=1 \leftrightarrow U(n) = T_{L_0}(n)
	\end{equation}
	Then from the formula (\ref{U(n)=TL0(n)}) and formula (\ref{MntoUn}), the following formula can be proved in ZFC:
	
	\begin{equation}
		\label{Mn=1iffn=1}
		M(n)=1 \leftrightarrow f(n)=1
	\end{equation}
	Hence from the property of $f(n)$ in the \textbf{lemma} \ref{property_of_TM0}, formula (\ref{Un=0or1}) and (\ref{Mn=1iffn=1}) the following formula can be proved in ZFC:
	
	\begin{equation}
		\label{Mn=fn}
		f(n)=M(n)
	\end{equation}
	Therefore, we can prove the formula (\ref{MinP}): $M \in \textbf{P}$ in ZFC, and the formula (\ref{Mn=fn}): $f(n)=M(n)$ can be proved in ZFC, too. That violates the \ref{p_t_08}th statement of the \textbf{lemma} \ref{property_of_TM0} and is thus a contradiction.
	
	So the assumption: the two statements ``$T_M$ decide the language $L_U$'' and ``$T_M$ is a polynomial-time Turing machine '' can be proved all in ZFC, is wrong. Thus we proved $L_U$ satisfies the statement \ref{pvsnp_con_2} of this theorem.

\end{proof}


\begin{thebibliography}{10}
	
	
	\bibitem{bgs}
	T. P. Baker, J. Gill, and R. Solovay, \emph{Relativizations of the P=?NP question}, SIAM Journal on Computing 4(4):431-442, 1975.
	
	\bibitem{jh1}
	J. Hartmanis, \emph{Feasible Computations and Provable Complexity Problems}, SIAM, 1978.
	
	\bibitem{hnh}
	J. Hartmanis and J. Hopcroft, \emph{Independence results in computer science}, SIGACT News 8(4):13-24, 1976.
	
	\bibitem{jh2}
	J. Hartmanis, \emph{Independence results about context-free languages and lower bounds}, Information Proc. Lett. 20(5):241-248, 1985.
	
	\bibitem{hl}
	Harry R.Lewis and Christos H.Papadimitriou, \emph{Elements of the Theory of Computation 2nd Ed}, Prentice-Hall, 1998.
	
	\bibitem{ms}
	Michael Sipser, \emph{Introduction to the Theory of Computation 3rd Ed}, Cengage Learning, 2012.
	
	
	\bibitem{chk}
	C. C. Chang and H. J. Keislelr, \emph{Model Theory}, North-Holland, Amsterdam, 1990.
	
	\bibitem{hgs}
	W. Hodges, \emph{Model Theory}, Cambridge University Press, 1993.
	
	
	\bibitem{dm}
	David Marker, \emph{Model Theory: An Introduction}, Springer, 2002. 
	
	
	\bibitem{yim}
	Yu. I. Maninr, \emph{A Course in Mathematical Logic},(Graduate texts in mathematics; 53) Springer-Verlag, 1977.
	
	
	
	
	
	
\end{thebibliography}
\end{document}